\documentclass[11pt]{amsart}

\usepackage{amssymb,amsmath,amsthm}
\usepackage{latexsym}
\usepackage{arydshln}
\usepackage{graphicx}
\graphicspath{{figures/}}
\usepackage{tikz-cd}
\usepackage{mathtools,amssymb}
\usepackage{kotex}
\usepackage{fancyhdr,geometry}
\usepackage{caption} 
\captionsetup[table]{skip=10pt}


\vfuzz2pt 
\hfuzz2pt 

\newtheorem{theorem}{Theorem}[section]
\newtheorem{thm}{Theorem}[section]

\newtheorem{lemma}[thm]{Lemma}

\newtheorem{proposition}[thm]{Proposition}

\theoremstyle{definition}

\theoremstyle{remark}
\newtheorem{remark}[thm]{Remark}

\numberwithin{equation}{section}


\geometry{a4paper, total={17cm,23cm}, left=3cm, right=3cm, top=3cm, bottom=3cm}

\pagestyle{fancy}
\fancyhead{}
\fancyhead[CE]{S. Choi, J. Oh, J. R. Park, S. Y. Yang, and H. Yun}
\fancyhead[CO]{Effective data reduction algorithm for topological data analysis}

\begin{document}

\title{Effective data reduction algorithm for topological data analysis}

\author{Seonmi Choi}
\author{Jinseok Oh}
\author{Jeong Rye Park}
\author{Seung Yeop Yang}
\author{Hongdae Yun}

\email[Seonmi Choi]{smchoi@knu.ac.kr}
\email[Jinseok Oh]{jinseokoh@knu.ac.kr}
\email[Jeong Rye Park]{parkjr@knu.ac.kr}
\email[Seung Yeop Yang]{seungyeop.yang@knu.ac.kr}
\email[Hongdae Yun]{yyyj1234@knu.ac.kr}

\address{Nonlinear Dynamics and Mathematical Application Center, Kyungpook National University, Daegu 41566, Republic of Korea}
\address{Department of Mathematics, Kyungpook National University, Daegu, 41566, Republic of Korea}
\address{Department of Mathematics, Kyungpook National University, Daegu, 41566, Republic of Korea}
\address{Department of Mathematics, Kyungpook National University, Daegu, 41566, Republic of Korea}
\address{Department of Mathematics, Kyungpook National University, Daegu, 41566, Republic of Korea}

\subjclass{}%

\begin{abstract}
One of the most interesting tools that have recently entered the data science toolbox is topological data analysis (TDA). 
With the explosion of available data sizes and dimensions, identifying and extracting the underlying structure of a given dataset is a fundamental challenge in data science, and TDA provides a methodology for analyzing the shape of a dataset using tools and prospects from algebraic topology. However, the computational complexity makes it quickly infeasible to process large datasets, especially those with high dimensions. Here, we introduce a preprocessing strategy called the Characteristic Lattice Algorithm (CLA), which allows users to reduce the size of a given dataset as desired while maintaining geometric and topological features in order to make the computation of TDA feasible or to shorten its computation time. In addition, we derive a stability theorem and an upper bound of the barcode errors for CLA based on the bottleneck distance.
\end{abstract}

\keywords{topological data analysis, persistent homology, Vietoris-Rips filtration, topology-preserving data reduction}

\subjclass[2020]{Primary: 55N31. Secondary: 62R40, 68T09.}

\maketitle

\section{\bf{Introduction}}

Recent advances in computational science have led to a data explosion in the scientific community exploring complex natural phenomena. In particular, high-dimensional and complex simulations are being implemented in various fields using high-performance computing technology. However, these datasets, especially biological data and astronomical data, are often so high-dimensional that they severely limit our visualization capabilities, as well as have much more noise and missing information.

Topological data analysis (TDA in abbreviation) combines theoretical tools and algorithms in algebraic topology and geometry to enable a mathematically rigorous study of the ``shape" of data \cite{Car09, CarSil10, Ghr08, LSLIVACC13, ZomCar05}. TDA provides dimensionality reduction and robustness for high-dimensional, incomplete, and noisy data. One of the main tools of TDA is persistent homology, which is a method for retrieving the essential topological features of a dataset. In order to calculate the persistent homology of a dataset, one first needs to represent the dataset as a filtered simplicial complex, which is a sequence of simplicial complexes built on the dataset in order to add topological structure to it. One widely used method for doing this is the Vietoris-Rips filtration \cite{Gro87, Vie27}. This filtration, unfortunately, is often too computationally expensive to construct in full (The run-time of computation grows exponentially with the number of simplices. \cite{OPTGH17}). As a result, new approaches to reducing computation costs have recently emerged that employ data reprocessing and sub-sampling \cite{CFLMRW15, MMW18, MSW20, MW19, RVT14, She13, She14, SC04}. Sheehy \cite{She13} introduced an alternative filtration on an $n$-point metric space, called the sparse Vietoris–Rips filtration that has a size $O(n)$ and can be computed in $O(n \log n)$ time. Chazal et al. \cite{CFLMRW15} proposed a method for reducing the computation independent of the filtration by sub-sampling the data randomly repeatedly and creating an average landscape for the point cloud. de Silva and Carlsson \cite{SC04} employ a random or MaxMin approach to select ``landmark" points that are near other members in the point cloud. Moitra et al. \cite{MMW18} use the centroids of $k$-means++ clusters in order to reduce the number of data points.\\

In this paper, we introduce a method for reducing data size while preserving its topological features. The computation time for the persistent homology of a given dataset is generally exponential with respect to the number of points in the data, which requires significant computational costs for large datasets. The proposed method allows for a significant reduction in computation time while having a minimal impact on the output results. Furthermore, the reduction ratio of the input data can be adjusted as desired, making it possible to perform the calculation for the persistent homology on big data. This method can be utilized as a preprocessing step to reduce the size of data before inputting it into persistent homology algorithms such as Ripser \cite{Bau21}. Therefore, it can be combined with other methods that reduce the computational cost of persistent homology. The cost-saving effect and the accuracy of this method will be analyzed using data generated through the NumPy library.\\

This paper is organized as follows. Section \ref{IntroTDA} describes the basic notions and ideas of TDA, particularly on persistent homology. Section \ref{Methods} introduces a new data reduction algorithm suitable for TDA. Sections \ref{Data} to \ref{R_acc} analyze the effects of this algorithm in terms of computational cost reduction and accuracy of data analysis. Finally, Section \ref{Conclusion} presents our conclusions.

\vspace{1cm}
\section{\bf{Background: Topological Data Analysis}}\label{IntroTDA}

We briefly review the basic definitions and notions of persistent homology of simplicial complexes obtained from point clouds. See \cite{EdeHar10, RabBlu20} for more details.\\

{\bf Filtered Vietoris-Rips Complex}\, Let $(X, d_{X})$ be a finite metric space and let $\varepsilon >0$ be fixed.
The {\it Vietoris-Rips complex} $VR_{\varepsilon}(X, d_{X})$ is the abstract simplicial complex such that
  \begin{itemize}
    \item[(1)] vertices are the elements of $X$,
    \item[(2)] $[v_{0}, v_{1}, \dots, v_{k}]$ is a $k$-simplex if $d_{X}(v_{i}, v_{j})\leq 2\varepsilon$ for all $0\leq i, j \leq k$, where $\{v_{0}$, $v_{1}$, $\dots$, $v_{k}\}$ is a subset of $X$.
  \end{itemize}
For $\varepsilon<\varepsilon'$, there exists the induced simplicial map 
$VR_{\varepsilon}(X, d_{X})\rightarrow VR_{\varepsilon'}(X, d_{X})$ as every simplex of $VR_{\varepsilon}(X, d_{X})$ is also a simplex of $VR_{\varepsilon'}(X, d_{X})$.
Since $X$ is finite, one can choose finite values $\varepsilon_{1}< \varepsilon_{2} < \cdots < \varepsilon_{m}$ so that $VR_{\varepsilon_{i}}(X, d_{X}) \subsetneq VR_{\varepsilon_{i+1}}(X, d_{X})$ for each $i.$
Then the natural inclusions 
$$VR_{\varepsilon_{1}}(X, d_{X}) \hookrightarrow VR_{\varepsilon_{2}}(X, d_{X}) \hookrightarrow \cdots \hookrightarrow VR_{\varepsilon_{m}}(X, d_{X})$$ 
imply the induced homomorphisms
$$PH_{k}(VR_{\varepsilon_{1}}(X, d_{X}); \mathbb{F}) \rightarrow PH_{k}(VR_{\varepsilon_{2}}(X, d_{X}); \mathbb{F}) \rightarrow \cdots \rightarrow PH_{k}(VR_{\varepsilon_{m}}(X, d_{X}); \mathbb{F}) $$ 
by the $k$th homology group functor $PH_{k}$, where $\mathbb{F}$ is a field.
The sequence of simplicial complexes $VR_{\bullet}(X, d_{X})$ is called a {\it filtered Vietoris-Rips complex} on $X.$ \\

{\bf Barcode and Persistent Diagram}\,
For a filtered Vietoris-Rips complex $VR_{\bullet}(X, d_{X})$, an element $\gamma \in PH_{k}(VR_{\varepsilon_{i}}(X, d_{X}); \mathbb{F})$ is 
  \begin{itemize}
    \item[(1)] {\it born at $\varepsilon_{i}$} if $\gamma \notin \textrm{Im}(\theta_{j,i})$ for any $\varepsilon_{j}<\varepsilon_{i}$,
     \item[(2)] {\it dies at $\varepsilon_{l} >\varepsilon_{i}$} if $\theta_{i,l}(\gamma)=0$ in $PH_{k}(VR_{\varepsilon_{l}}(X, d_{X}); \mathbb{F})$ or for some $\varepsilon_{j}<\varepsilon_{l}$, there exists an element $\alpha \in PH_{k}(VR_{\varepsilon_{j}}(X, d_{X}); \mathbb{F})$ such that $\theta_{i,l}(\gamma)=\theta_{j,l}(\alpha)$,
  \end{itemize}
where $\theta_{a,b}$ denotes the induced homomorphism $PH_{k}(VR_{\varepsilon_{a}}(X, d_{X}); \mathbb{F}) \rightarrow PH_{k}(VR_{\varepsilon_{b}}(X, d_{X}); \mathbb{F})$ for any $\varepsilon_{a}<\varepsilon_{b}$.
The interval $[\varepsilon_{i},\varepsilon_{l})$ represents the lifespan of $\gamma$.
The $k$th {\it barcode} of $X$, denoted by $B_{k}X$, can be defined as the multiset of non-empty intervals, called {\it bars}, of the form either $[x, y)$ or $[x, \infty)$ and the bars representing the lifespans of particular elements in $PH_{k}(VR_{\bullet}(X, d_{X}); \mathbb{F}).$ It is often efficient to consider a barcode as a set of points in $\mathbb{R}^{2},$ called a {\it persistent diagram} (PD in abbreviation), by replacing each bar $[x, y)$ of the barcode with the point $(x, y) \in \mathbb{R}^{2}.$ Figure \ref{TDA}, for example, illustrates the process of obtaining a Vietoris-Rips complex, a barcode, and a persistent diagram of the given point cloud $X$ consisting of nine points in $\mathbb{R}^2.$

\begin{figure}[h!]
  \centering
  \includegraphics[width = 15cm]{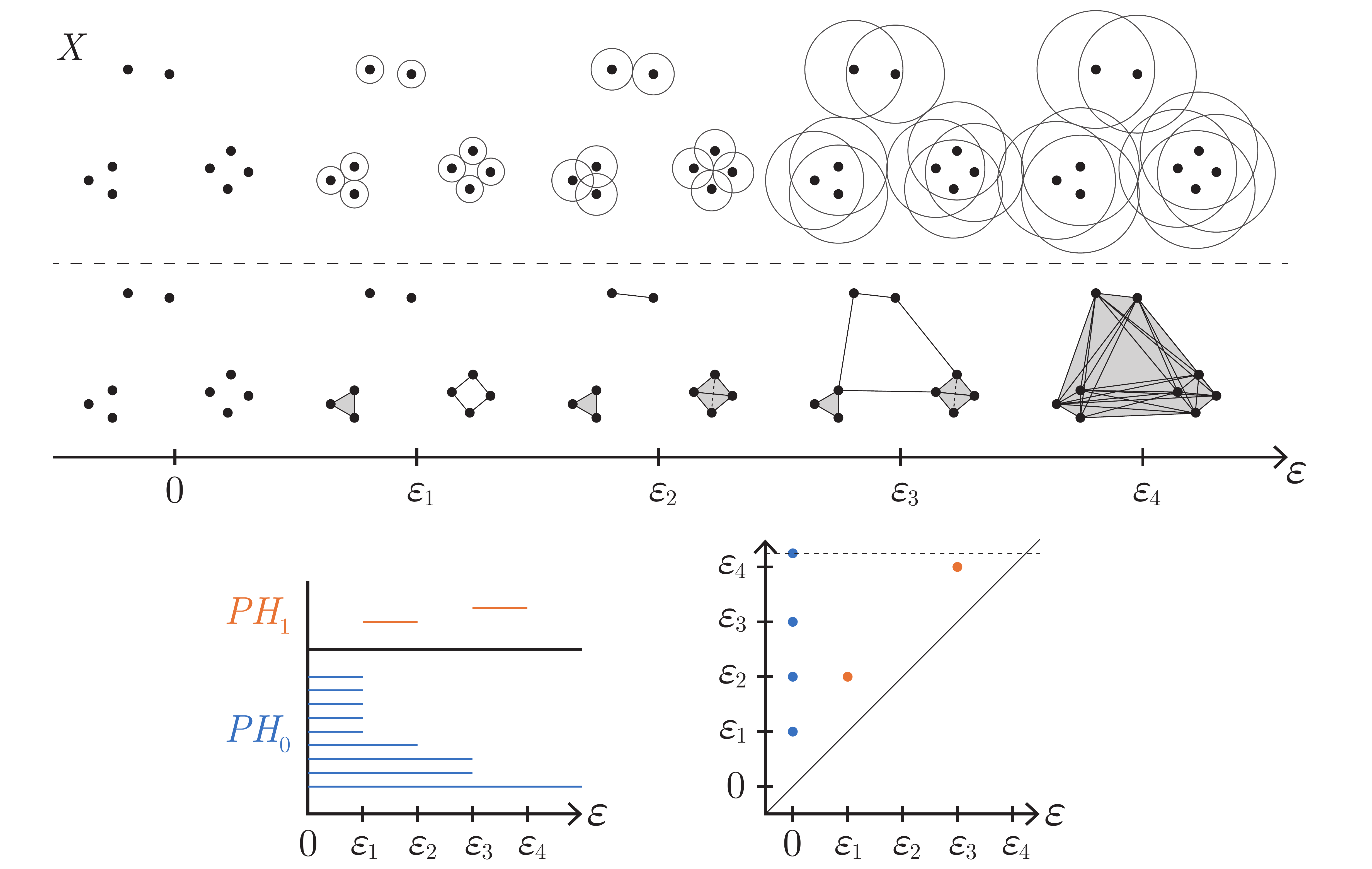}
  \caption{{\bf A barcode and a persistent diagram}}\label{TDA}
\end{figure}


For the stability of persistent homology under perturbation, we use the distance between barcodes to be precise about measuring changes in the output of TDA.\\

{\bf The Gromov-Hausdorff Distance}\, 
Let $(X,d_{X})$ and $(Y,d_{Y})$ be two compact metric spaces.
The {\it Gromov-Hausdorff distance} between $X$ and $Y$ is defined by
$$d_{GH}((X,d_{X}), (Y,d_{Y}))=\underset{\theta_{1}, \theta_{2}}{\textrm{inf}} d_{H}(X, Y),$$
where $\theta_{1}:X\rightarrow Z$ and $\theta_{2}:Y\rightarrow Z$ are isometric embeddings of $(X,d_{X})$ and $(Y,d_{Y})$ in any metric space $(Z, d_{Z})$ respectively and $d_{H}$ is the Hausdorff distance.\\

{\bf The Bottleneck Distance}\, 
For two intervals $[x_{1},y_{1})$ and $[x_{2},y_{2}),$ we define
$$d_{\infty}([x_{1},y_{1}),[x_{2},y_{2}))=\textrm{max}\{|x_{1}-x_{2}|,|y_{1}-y_{2}|\}.$$
For the empty set $\emptyset$, we extend $d_{\infty}$ by defining
$$d_{\infty}([x,y),\emptyset)=\frac{|y-x|}{2}.$$
Let $B_{1}$ and $B_{2}$ be barcodes. We add $\emptyset$ to $B_{1}$ and $B_{2},$ respectively and assume that $|B_{1}|<|B_{2}|$ without loss of generality.
A {\it matching} is a bijection $\phi:A_{1} \rightarrow A_{2},$ where for each $i=1,2,$ $A_{i}$ is a multi-subset of $B_{i}$ and the elements of $B_{i} \setminus A_{i}$ are considered to be matched with $\emptyset.$
The {\it bottleneck distance} between $B_{1}$ and $B_{2}$ is defined to be
$$d_{B}(B_{1}, B_{2})=\underset{\phi}{\textrm{inf}} \underset{I\in B_{1}}{\textrm{sup}} d_{\infty}(I, \phi(I)),$$
where $\phi$ varies over all matchings between $B_{1}$ and $B_{2}$ and the supremum is taken over all bars $I$ in $B_{1}.$ 

The bottleneck distance quantifies the maximum difference in the best matching between the two barcodes. In other words, two barcodes are considered similar in the bottleneck distance when after ignoring ``short" bars, the endpoints of matching ``long" bars are close. The following is the stability theorem for barcodes with respect to the bottleneck distance.

\begin{proposition}\cite{CC-SGMO09, C-SEH07}\label{PropBD}
Let $(X,d_{X})$ and $(Y,d_{Y})$ be finite metric spaces.
Then for all $k\geq 0,$
$$d_{B}(B_{k}X, B_{k}Y) \leq d_{GH}((X,d_{X}), (Y,d_{Y})).$$
\end{proposition}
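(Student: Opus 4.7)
The plan is to reduce the desired inequality to an interleaving statement between the persistence modules $PH_{k}(VR_{\bullet}(X,d_{X}); \mathbb{F})$ and $PH_{k}(VR_{\bullet}(Y,d_{Y}); \mathbb{F})$, and then invoke the algebraic stability theorem for persistence modules. First I would fix $\delta>0$ and, by definition of $d_{GH}$, choose a metric space $(Z,d_{Z})$ together with isometric embeddings $\theta_{1}\colon X\to Z$ and $\theta_{2}\colon Y\to Z$ realizing $d_{H}(\theta_{1}(X), \theta_{2}(Y)) < d_{GH}((X,d_{X}),(Y,d_{Y}))+\delta =: \eta$. Using these embeddings I would extract (nonunique) correspondences $f\colon X\to Y$ and $g\colon Y\to X$ satisfying $d_{Z}(\theta_{1}(x), \theta_{2}(f(x)))<\eta$ and $d_{Z}(\theta_{2}(y), \theta_{1}(g(y)))<\eta$ for every $x\in X$ and $y\in Y$.

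Next I would promote $f$ and $g$ to simplicial maps between filtered Vietoris-Rips complexes. The key computation is the triangle-inequality estimate
\begin{equation*}
|d_{X}(x,x')-d_{Y}(f(x),f(x'))| \leq d_{Z}(\theta_{1}(x),\theta_{2}(f(x))) + d_{Z}(\theta_{1}(x'),\theta_{2}(f(x'))) < 2\eta,
\end{equation*}
together with its analogue for $g$. Under the paper's convention that a simplex appears at scale $\varepsilon$ when pairwise distances are at most $2\varepsilon$, this directly gives simplicial maps $f_{\varepsilon}\colon VR_{\varepsilon}(X,d_{X})\to VR_{\varepsilon+\eta}(Y,d_{Y})$ and $g_{\varepsilon}\colon VR_{\varepsilon}(Y,d_{Y})\to VR_{\varepsilon+\eta}(X,d_{X})$ that commute with the structural inclusions up to the obvious shift.

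The heart of the argument is to verify that the compositions $g_{\varepsilon+\eta}\circ f_{\varepsilon}$ and $f_{\varepsilon+\eta}\circ g_{\varepsilon}$ are contiguous to the Vietoris-Rips inclusions at the shifted scale $\varepsilon+2\eta$. Given a simplex $[v_{0},\dots,v_{k}]$ in $VR_{\varepsilon}(X,d_{X})$, I would check that $[v_{0},\dots,v_{k},g(f(v_{0})),\dots,g(f(v_{k}))]$ is a simplex of $VR_{\varepsilon+2\eta}(X,d_{X})$ by bounding pairwise distances via the pointwise estimate $d_{X}(v,g(f(v)))<2\eta$ and the triangle inequality. Contiguous simplicial maps induce identical maps on homology, so on passing to $PH_{k}$ the pair $(f_{*},g_{*})$ yields an $\eta$-interleaving between the two persistence modules, and the algebraic stability theorem then delivers $d_{B}(B_{k}X, B_{k}Y)\leq \eta$. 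Letting $\delta\to 0$ gives the claimed bound.

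The main technical obstacle I anticipate is bookkeeping the scale shifts consistently with the convention $d_{X}(v_{i},v_{j})\leq 2\varepsilon$ used in the excerpt; this ``diameter divided by two'' parametrization is what absorbs the usual factor of $2$ appearing in the classical Vietoris-Rips stability bound, and one must verify that the additive shift $\eta$ (rather than $2\eta$) is the correct one in this convention. A secondary subtlety is that the algebraic stability theorem must be applied in the form appropriate to pointwise-finite-dimensional persistence modules over $\mathbb{F}$, which is why the interleaving formulation is cleaner than attempting a direct bar-by-bar matching.
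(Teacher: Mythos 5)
The paper does not actually prove this proposition---it is quoted from \cite{CC-SGMO09, C-SEH07}---and your argument is essentially the standard proof given in those references: maps extracted from a near-optimal common isometric embedding, simplicial maps with an additive shift $\eta$, contiguity of the round-trip compositions with the inclusions at scale $\varepsilon+2\eta$, and algebraic stability applied to the resulting $\eta$-interleaving. Your bookkeeping of the $d_{X}(v_i,v_j)\leq 2\varepsilon$ convention is correct (it is precisely what turns the factor $2\,d_{GH}$ of the diameter parametrization into $d_{GH}$ here), so the proposal is sound and consistent with the cited sources.
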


\vspace{1cm}
\section{\bf{Results}}\label{DM}

In this section, we introduce a novel method for reducing the size of a dataset as desired while maintaining geometric and topological features, in order to enable feasible computation of TDA or to reduce its computation time. Moreover, we investigate the extent to which the calculation speed of TDA improves as well as how well the accuracy of data analysis utilizing TDA is maintained when applying this method, by using random data.

\subsection{\bf{Topology-Preserving Data Reduction: Characteristic Lattice Algorithm}}\label{Methods}

Let $X$ be a point cloud in the $m$-dimensional Euclidean space $\mathbb{R}^{m}$ and let $\delta>0$ be fixed.
The procedure of our method is outlined as follows:
\begin{enumerate}
    \item Divide $\mathbb{R}^{m}$ into $m$-dimensional hypercubes (also simply called $m$-cubes) with side length $\delta$. To be more precise, we consider the partition $\mathbb{R}^{m} / {\sim}$ of $\mathbb{R}^{m}$ induced by the equivalence relation $\sim$ on $\mathbb{R}^{m}$ defined by 
    $$(x_{1}, x_{2}, \ldots, x_{m}) \sim (y_{1}, y_{2}, \ldots, y_{m}) \iff \textrm{ for each } i,\ x_{i}, y_{i} \in [k_{i}\delta, (k_{i}+1)\delta) \textrm{ for some } k_{i} \in \mathbb{Z},$$
    and then the corresponding equivalence classes are $m$-cubes.
    \item For each $m$-cube $C$, if $C \cap X \ne \emptyset,$ then we select one sample point $\mathbf{x}_{C}^{*}$ from $C$ (the sample point does not need to be an element of $C \cap X$). Otherwise, we do not select any sample points. See Figure \ref{ExaCLA1} for example.
    \item The set of all the sample points obtained in (2) forms a new point cloud, which is denoted by $X_{\delta}^{*}.$ See Figure \ref{ExaCLA2} for example.
\end{enumerate}

\begin{figure}[h!]
  \centering
  \includegraphics[width = 15cm]{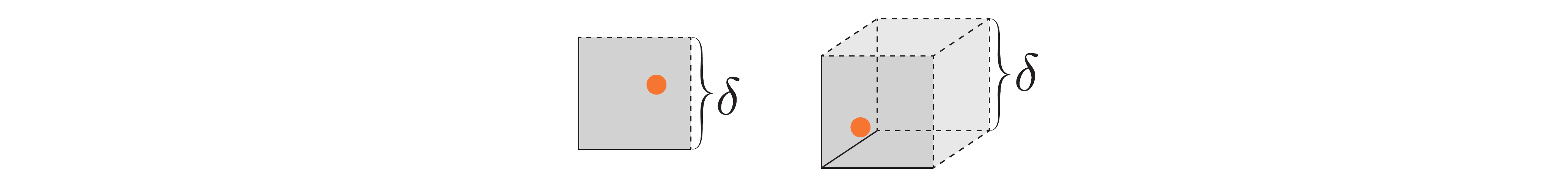}
  \caption{{\bf Sample points from a $2$-cube and a $3$-cube}}\label{ExaCLA1}
\end{figure}

\begin{figure}[h!]
  \centering
  \includegraphics[width = 15cm]{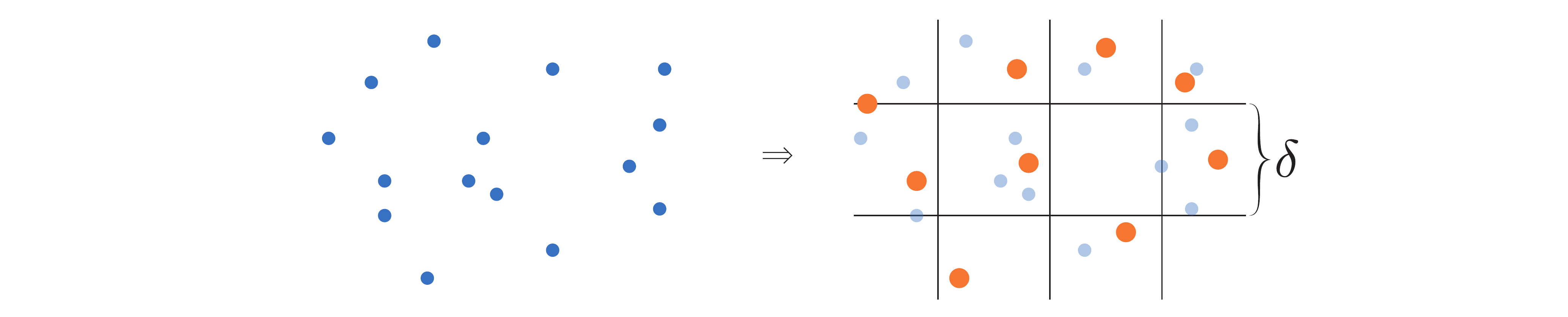}
  \caption{{\bf $X_{\delta}^{*}$(the set of orange points) derived from $X$ using CLA}}\label{ExaCLA2}
\end{figure}

It is obvious that as the value of $\delta$ increases, the size of the new point cloud $X_{\delta}^{*}$ decreases. We call this data reduction technique the {\it Characteristic Lattice Algorithm} (CLA in abbreviation) {\it with $\delta$}, which allows us to shorten the run-time for calculating a persistent diagram while preserving the topological features of the original point cloud $X.$ 

\begin{remark}
When implementing CLA, one can choose all sample points as the centers of the $m$-cubes, and the outcome derived from this selection is denoted by $X_{\delta}^{c}.$ That is,
$$X_{\delta}^{c}=\{(\overline{x}_{1}, \overline{x}_{2}, \ldots, \overline{x}_{m}) \in \mathbb{R}^{m} ~|~ (x_{1}, x_{2}, \ldots, x_{m})\in X\},$$
where $\overline{x}_{i} = \delta \left[ \frac{x_{i}}{\delta} \right] + \frac{\delta}{2}$ for each $i=1, 2, \ldots, m.$
\end{remark}
 
The following theorem supports that CLA preserves topological features of a given data well and that the choice of sample points does not significantly affect the results of CLA.

\begin{theorem}\label{Thm}
Let $X$ be a finite point cloud in $\mathbb{R}^{m}$ and let $\delta > 0$ be fixed.
Then for all $k\geq 0,$
\begin{itemize}
  \item[(1)] $d_{B}(B_{k}X, B_{k}X_{\delta}^{*}) \leq \sqrt{m}\delta;$
  \item[(2)] $d_{B}(B_{k}X_{\delta}^{*}, B_{k}X_{\delta}^{c}) \leq \frac{\sqrt{m}\delta}{2}.$
\end{itemize}
\end{theorem}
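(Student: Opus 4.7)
The plan is to apply Proposition \ref{PropBD} and reduce each inequality to a bound on the Gromov--Hausdorff distance, which in turn is controlled by the Hausdorff distance inside the common ambient space $\mathbb{R}^{m}$. The key geometric facts are that an $m$-cube with side length $\delta$ has diameter $\sqrt{m}\,\delta$ and the distance from its center to any point inside it is at most $\tfrac{\sqrt{m}\,\delta}{2}$.

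For part (1), since $X$ and $X_{\delta}^{*}$ both sit inside $\mathbb{R}^{m}$, the identity isometric embedding of each into $\mathbb{R}^{m}$ gives
\[
d_{GH}\bigl((X,d_{X}),(X_{\delta}^{*},d_{X_{\delta}^{*}})\bigr) \;\le\; d_{H}(X,X_{\delta}^{*}).
\]
To bound $d_{H}(X,X_{\delta}^{*})$, I would argue in two directions. Given $x \in X$, let $C$ be the unique $m$-cube of the partition containing $x$; then $C \cap X \neq \emptyset$, so a sample point $\mathbf{x}_{C}^{*} \in X_{\delta}^{*}$ was chosen from $C$, and since both $x$ and $\mathbf{x}_{C}^{*}$ lie in $C$, their Euclidean distance is at most the diameter $\sqrt{m}\,\delta$. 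Conversely, each $\mathbf{x}_{C}^{*} \in X_{\delta}^{*}$ corresponds to a cube $C$ with $C \cap X \neq \emptyset$, so pick any $x \in C \cap X$ to get the same bound. This gives $d_{H}(X,X_{\delta}^{*}) \le \sqrt{m}\,\delta$, hence by Proposition \ref{PropBD},
\[
d_{B}(B_{k}X, B_{k}X_{\delta}^{*}) \;\le\; d_{GH}(X, X_{\delta}^{*}) \;\le\; \sqrt{m}\,\delta.
\]

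For part (2), the construction gives a natural bijection between $X_{\delta}^{*}$ and $X_{\delta}^{c}$: both consist of exactly one point chosen from each $m$-cube $C$ with $C \cap X \neq \emptyset$. Under this bijection, the sample point $\mathbf{x}_{C}^{*}$ and the center of $C$ both lie in $C$, but in fact the bound is tighter because one of them is the center, so their distance is at most $\tfrac{\sqrt{m}\,\delta}{2}$. Taking the supremum over all matched pairs gives $d_{H}(X_{\delta}^{*}, X_{\delta}^{c}) \le \tfrac{\sqrt{m}\,\delta}{2}$ in $\mathbb{R}^{m}$, and another application of Proposition \ref{PropBD} finishes the proof.

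Neither step is a serious obstacle: both inequalities reduce to measuring Euclidean distances inside a single $m$-cube. The only thing that requires a brief justification is verifying that the ambient Hausdorff distance is a valid upper bound for $d_{GH}$, which follows immediately from the definition since $X$ and $X_{\delta}^{*}$ (resp.\ $X_{\delta}^{c}$) are already isometrically embedded in the common metric space $(\mathbb{R}^{m}, \|\cdot\|_{2})$.
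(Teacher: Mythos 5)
Your proposal is correct and follows essentially the same route as the paper: bound the ambient Hausdorff distance (via the diameter $\sqrt{m}\,\delta$ of an $m$-cube and the half-diameter $\tfrac{\sqrt{m}\,\delta}{2}$ from its center), pass to the Gromov--Hausdorff distance using the identity embeddings into $\mathbb{R}^{m}$, and conclude with Proposition \ref{PropBD}. The only difference is that you spell out the cube-diameter justifications that the paper dismisses as ``obvious.''
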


\begin{proof}
Note that the Hausdorff distance between $A$ and $B$ is defined as $$d_{H}(A, B)={\textrm{inf}}\{\varepsilon >0  ~|~  B\subseteq A_{\varepsilon}, A\subseteq B_{\varepsilon}\},$$
where $A_{\varepsilon} = \underset{a \in A}{\bigcup} \{x \in X  ~|~ d_{X}(x,a) \leq \varepsilon \}   $ and $B_{\varepsilon} = \underset{b \in B}{\bigcup} \{x \in X  ~|~ d_{X}(x,b) \leq \varepsilon \}$. 
When $\varepsilon = \sqrt{m}\delta$, it is obvious that $X_{\delta}^{*}\subseteq X_{\varepsilon}$ and $X\subseteq (X_{\delta}^{*})_{\varepsilon}$.
Similarly, if $\varepsilon = \frac{\sqrt{m}\delta}{2}$, then $X_{\delta}^{c}\subseteq (X_{\delta}^{*})_{\varepsilon}$ and $X_{\delta}^{*}\subseteq (X_{\delta}^{c})_{\varepsilon}$.
Therefore, $$d_{H}(X, X_{\delta}^{*})\leq \sqrt{m}\delta ~\textrm{ and }~ d_{H}(X_{\delta}^{*},X_{\delta}^{c})\leq \frac{\sqrt{m}\delta}{2}.$$
By the definition of Gromov-Hausdorff distance, we have 
$$ d_{GH}(X, X_{\delta}^{*}) \leq \sqrt{m}\delta ~\textrm{ and }~ d_{GH}(X_{\delta}^{*}, X_{\delta}^{c}) \leq \frac{\sqrt{m}\delta}{2}.$$
Then the inequalities are obtained immediately by Proposition \ref{PropBD}, as desired.
\end{proof}

Let $p_{k}:\mathbb{R}^{m}\rightarrow \mathbb{R}$ denote the $k$th projection defined by $p_{k}(x_{1}, \ldots, x_{k}, \ldots, x_{m})=x_{k}$, for each $k=1, \cdots, m$.
Let $M_{k}$ and $m_{k}$ denote the maximum and the minimum of $p_{k}(X)$, respectively.
Then the following is obtained by using the Pigeonhole principle. 

\begin{lemma}\label{DeltaRange}
If $\delta$ satisfies
$\prod\limits_{k=1}^{m}\left( \left[ \frac{M_{k}}{\delta} \right]+\left[ \frac{-m_{k}}{\delta} \right] +2 \right) < |X|,$
then $|X_{\delta}^{*}|<|X|.$ 
\end{lemma}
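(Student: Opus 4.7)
The argument is essentially a bounding-box count combined with the pigeonhole principle. I would first note that by step~(2) of the CLA construction each $m$-cube of the partition $\mathbb{R}^{m}/{\sim}$ contributes at most one sample point to $X_{\delta}^{*}$, so $|X_{\delta}^{*}|$ is exactly the number of cubes whose intersection with $X$ is nonempty. To bound $|X_{\delta}^{*}|$, it therefore suffices to bound the number of cubes that can possibly meet $X$.

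Next I would localise the candidate cubes inside the coordinate bounding box of $X$. For $x=(x_{1},\ldots,x_{m})\in X$, the equivalence class of $x$ is indexed by the tuple $(k_{1},\ldots,k_{m})\in\mathbb{Z}^{m}$ with $k_{i}=\lfloor x_{i}/\delta\rfloor$, and since $m_{i}\leq x_{i}\leq M_{i}$ we must have $\lfloor m_{i}/\delta\rfloor\leq k_{i}\leq \lfloor M_{i}/\delta\rfloor$. Consequently the number of cubes meeting $X$ is at most
$$\prod_{i=1}^{m}\bigl(\lfloor M_{i}/\delta\rfloor - \lfloor m_{i}/\delta\rfloor + 1\bigr).$$

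To match the form stated in the lemma I would then use the elementary inequality $-\lfloor y\rfloor \leq \lfloor -y\rfloor + 1$ (with equality whenever $y\notin\mathbb{Z}$), applied with $y=m_{i}/\delta$. In the notation $[\,\cdot\,]$ of the paper this yields the uniform upper bound $\prod_{i=1}^{m}\bigl([M_{i}/\delta]+[-m_{i}/\delta]+2\bigr)$ on the number of candidate cubes. Under the hypothesis of the lemma this product is strictly smaller than $|X|$, so by pigeonhole at least two distinct points of $X$ must fall into a common cube, and hence $|X_{\delta}^{*}|$ is strictly less than $|X|$, as required.

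The only subtle step is the floor-function identity converting $-\lfloor m_{i}/\delta\rfloor$ into $\lfloor -m_{i}/\delta\rfloor+1$ to reconcile the natural count with the stated expression; everything else is a routine enumeration of lattice points in the bounding box followed by a direct application of the pigeonhole principle, so I do not anticipate any serious obstacle.
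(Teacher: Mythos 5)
Your proof is correct and follows exactly the route the paper intends: the paper states the lemma without proof, remarking only that it ``is obtained by using the Pigeonhole principle,'' and your bounding-box enumeration of the at most $\prod_{i}\bigl(\lfloor M_{i}/\delta\rfloor+\lfloor -m_{i}/\delta\rfloor+2\bigr)$ candidate cubes, together with the identity $-\lfloor y\rfloor\leq\lfloor -y\rfloor+1$, supplies precisely the missing details. (Note that once you observe $|X_{\delta}^{*}|$ equals the number of nonempty cubes, the conclusion follows directly from the count without even invoking pigeonhole.)
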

Note that CLA is meaningful to apply only when the number of points in $X$ is reduced. 
Therefore, it is important to consider the extent to which the data size is reduced.  
From now on, we will consider the {\it reduction rate} of the number of $X$, denoted by $r$, and use the corresponding $\delta$ for each $r.$
By changing the reduction rate $r$, we can observe the extent to which the computation time is reduced when CLA is applied.

\subsection{\bf{Data}}\label{Data}



We construct the dataset using NumPy 1.24 in Python 3.9, which consists of $n$ pairs of two point clouds in $\mathbb{R}^{m}$.
Both point clouds have the same number of points, let us say $p$.
For a fixed integer $s$ as the random seed, two types of point clouds, namely \textit{sphere type} and \textit{random type}, are formed as follows:  
\begin{enumerate}
    \item Sphere type: the data having $p$ points in 
an $(m-1)$-dimensional sphere shape with a noise $5\times 0.1^{m-1}$ constructed by using the $(m-1)$-dimensional spherical coordinate system;
    \item Random type: the data consisting of uniformly random $p$ points.
\end{enumerate}
One can normalize them in the $m$-dimensional box $[0,100]^{m}$ in order to have the same scale.
Thus, for each seed $s$, there is a pair of two point clouds.
As $s$ ranges from $0$ to $n-1$, we can get $n$ pairs of two point clouds. 
That is, the size of the dataset is $2n$. 
For example, Figure \ref{ExaData} shows a 2-dimensional (2D) dataset and a 3-dimensional (3D) dataset when $n=1$.
\begin{figure}[h!]
  \centering
  \includegraphics[width = 8cm]{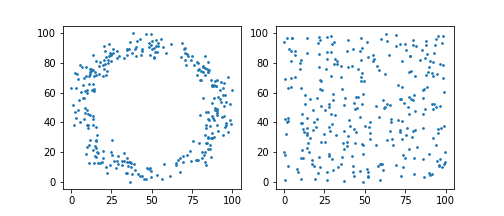}
  \includegraphics[width = 8cm]{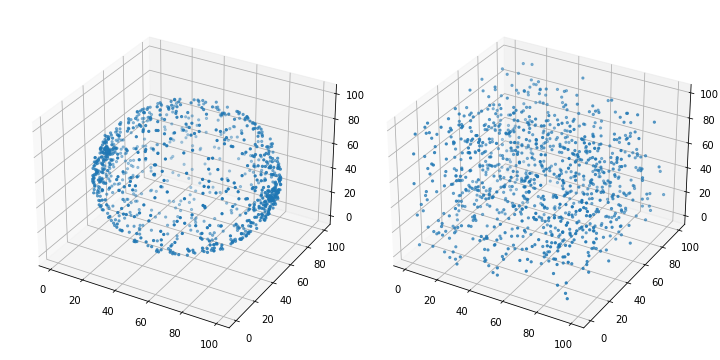}  
  \caption{{\bf A 2D dataset ($p=300$) and a 3D dataset ($p=1,000$)}} \label{ExaData}
\end{figure}
In Figure \ref{ExaDataCLA}, the datasets whose points are colored by orange can be constructed from the datasets in Figure \ref{ExaData} by applying CLA with $\delta=10$ to the 2D dataset and $\delta=20$ to the 3D dataset, respectively, using the centers of the $m$-cubes as the sample points.
\begin{figure}[h!]
  \centering
  \includegraphics[width = 8cm]{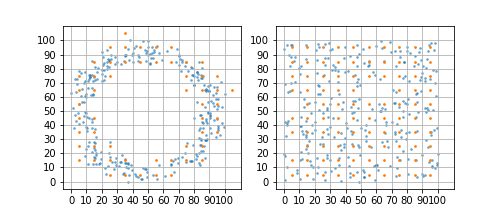}
  \includegraphics[width = 8cm]{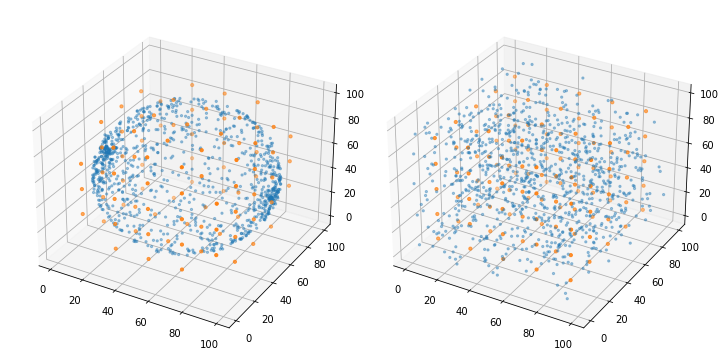}
  \caption{{\bf Data reduction of the datasets in Figure \ref{ExaData} by applying CLA} 
  The point clouds consisting of orange points can be obtained from the point clouds consisting of blue points by applying CLA with $\delta=10$ to the 2D dataset and $\delta=20$ to the 3D dataset.
}\label{ExaDataCLA}
\end{figure}\\

To evaluate the efficiency and effectiveness of CLA, it is necessary to analyze the computation time and classification accuracy.
The workflow on the left in Figure \ref{workflow} represents a pipeline that measures the time required to find a persistent diagram when applying and not applying CLA.
One can evaluate the effectiveness of CLA using a support vector machine (SVM) \cite{CV95} on the dataset.
The workflow on the right in Figure \ref{workflow} shows a pipeline for comparing the accuracy of data analysis with and without CLA.
The hardware environment was 12th Gen Intel(R) Core(TM) i7-12700K 3.61 GHz, 128 GB RAM.

\begin{figure}[h!]
  \centering
  \includegraphics[width = 10cm]{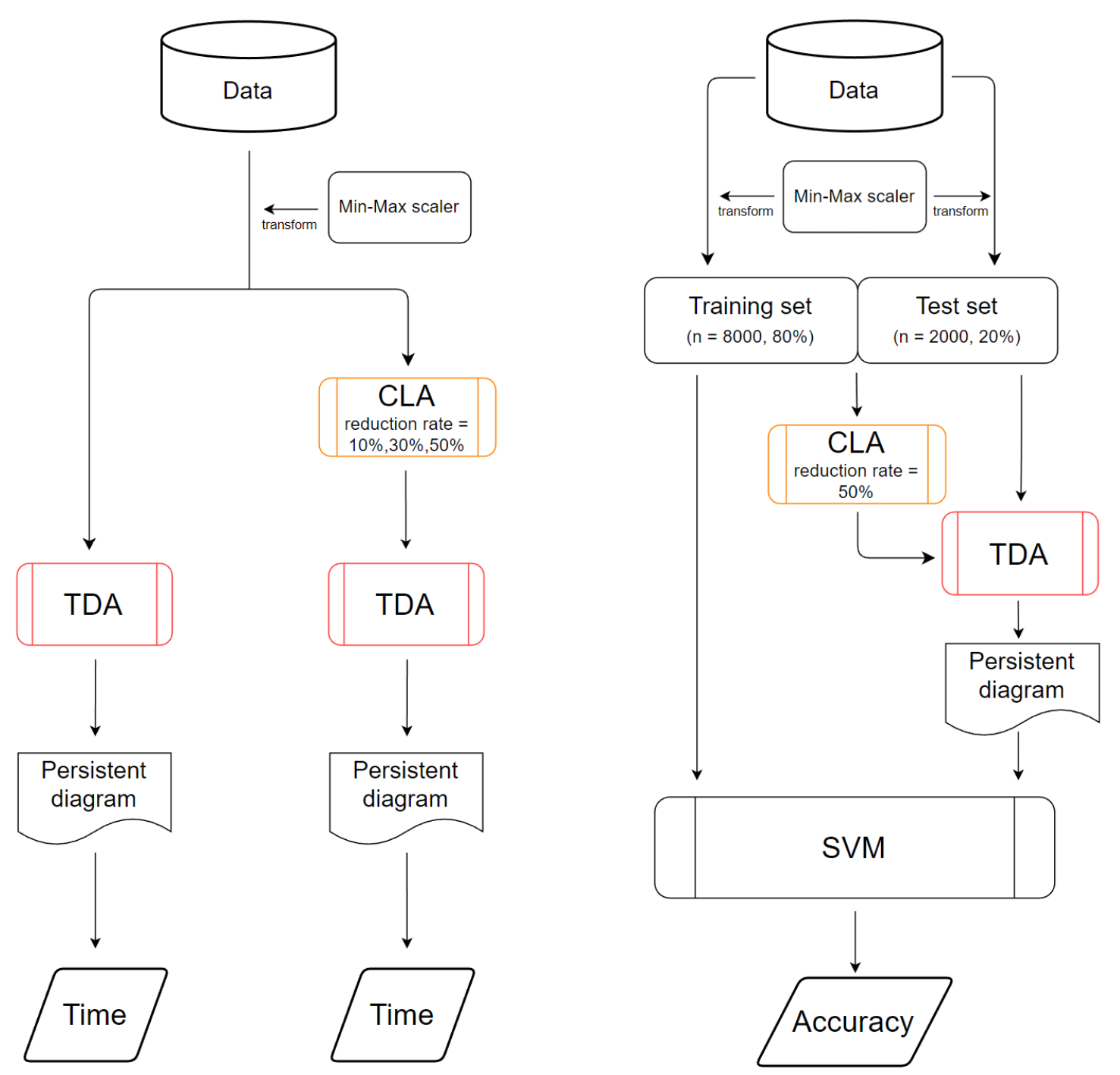}
\caption{{\bf Workflows for measuring computation time and analysis accuracy}
The left workflow compares the computation time required to obtain the PD when applying CLA with the reduction rates of $r=10\%$, $30\%$, or $50\%$, as well as without applying CLA.
The right workflow provides accuracy for classifying two types of point clouds in the dataset when applying CLA with a reduction rate of $r=50\%$ compared to not applying CLA.} \label{workflow}
\end{figure}

\subsection{\bf{Efficiency in Computation Time}}\label{R_time}

To assess the efficiency of calculating a PD with CLA, we will use a random type point cloud as described in Section \ref{Data}.
Figure \ref{(time per pt)} illustrates the run-time for calculating the PD of the 2D and the 3D point clouds. 
Note that $p$ increases by $1,000$ from $1,000$ to $30,000$ in the 2D case and $p$ increases by $100$ from $100$ to $3,000$ in the 3D case.
We compare the computation time it takes to calculate the PD of the original point cloud and the time it takes to calculate the PD of that point cloud after applying CLA with the reduction rates of $r=10\%$, $30\%$, or $50\%$.

\begin{figure}[h!]
\centering
\includegraphics[width = 10cm]{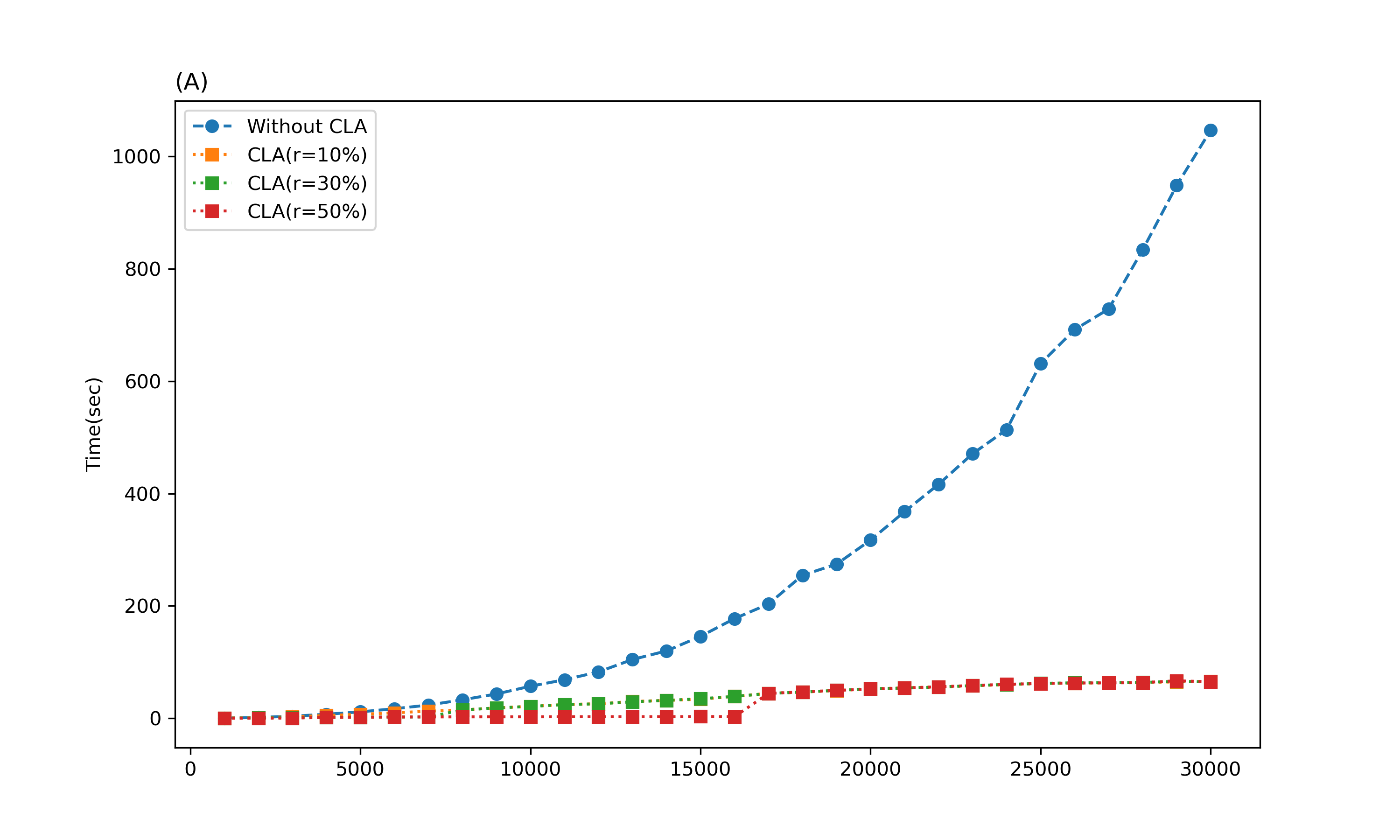}
\includegraphics[width = 10cm]{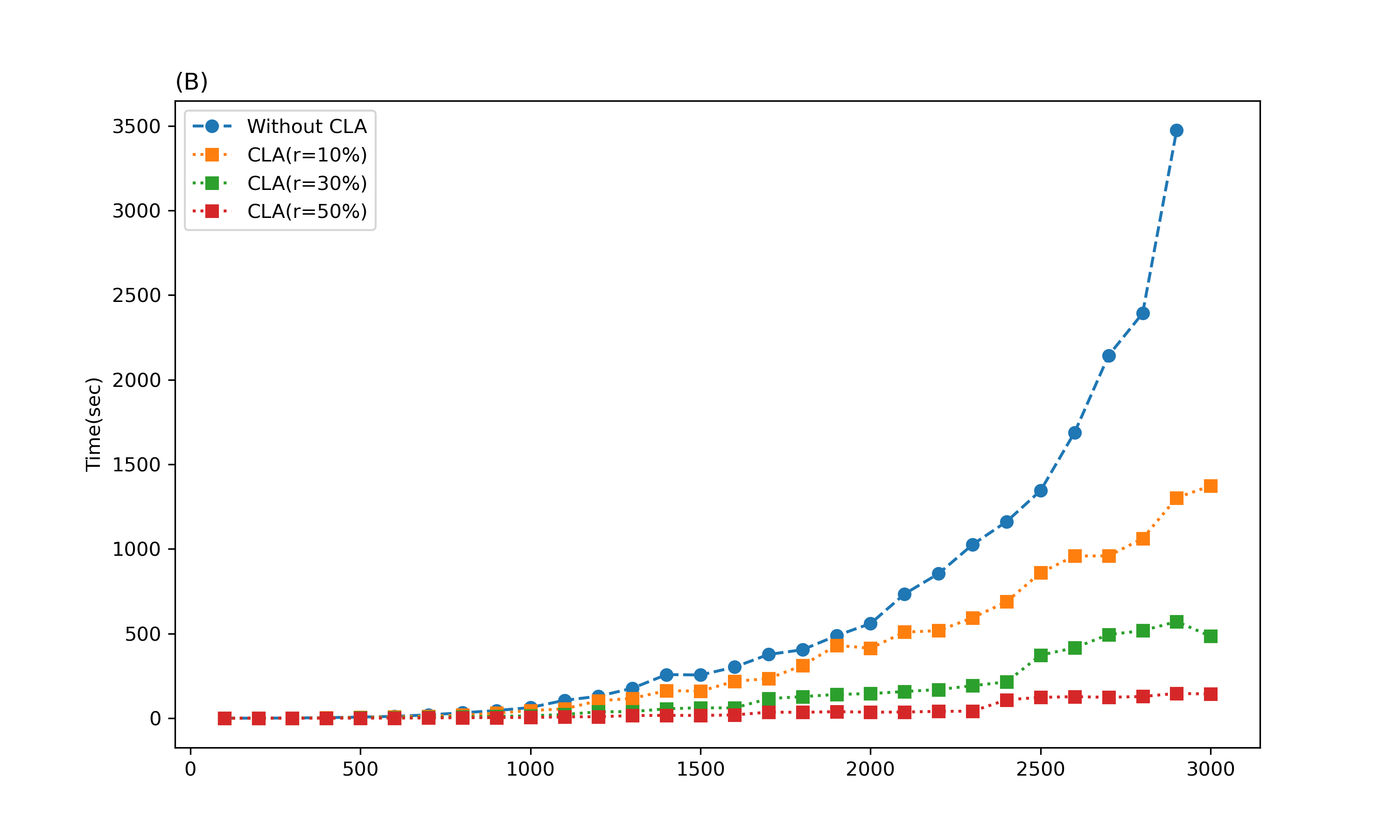}
\caption{{\bf PD Calculation Time Graphs}
The run-time of calculating the PD from the 2D point cloud and the 3D point cloud is presented in the above graphs, where $p$ increases by $1,000$ from $1,000$ to $30,000$ in the 2D point cloud and $p$ increases by $100$ from $100$ to $3,000$ in the 3D point cloud, respectively. (Note that when $p$  reaches $3,000$, due to excessive computation time in our hardware environment, we were unable to calculate the PD without applying CLA.)
The blue dashed line represents the calculation time of the PD, while the other lines indicate the calculation time of the PD after applying CLA with the reduction rates $10\%, 30\%$, and $50\%$.
} \label{(time per pt)}
\end{figure}

As shown in the graph (A) of Figure \ref{(time per pt)}, there is a noticeable difference in computation time between applying and not applying CLA as the size of the 2D data increases. In particular, the information when $p = 30,000$ are shown in Table \ref{Table2Dtime}.
The reduction rates in computation time when applying CLA with the reduction rates $10\%, 30\%$, and $50\%$ are greater than $93\%$ in all cases.
\begin{table}[h!]
    \centering
    \begin{tabular}{c||c|c|c|c}
        $p=30,000$ & Without CLA & CLA($r=10\%$) & CLA($r=30\%$) & CLA ($r=50\%$) \\
         \hline
       Computation Time(sec)  & 1046.5445 & 65.9021 & 64.6348 & 64.9718 \\
       Rate of Decrease($\%$) & $-$ & 93.7 & 93.8 & 93.8 \\ 
    \end{tabular}
    \caption{Reduction rates of calculation time when applying CLA to a 2D point cloud}\label{Table2Dtime}
\end{table}

The graph (B) of Figure \ref{(time per pt)} shows that the utilization of CLA in the 3D point cloud effectively reduces the computation time compared to not using CLA.
Specifically, the information of the calculation time when $p = 2,900$ are shown in Table \ref{Table3Dtime}. 
\begin{table}[h!]
    \centering
    \begin{tabular}{c||c|c|c|c}
        $p=2,900$ & Without CLA & CLA($r=10\%$) & CLA($r=30\%$) & CLA ($r=50\%$) \\
         \hline
       Computation Time(sec)  & 3474.1446 & 1302.1153 & 571.3973 & 146.7123 \\
       Rate of decrease($\%$) & $-$ & 62.5 & 83.5 & 95.7 \\
    \end{tabular}
    \caption{Reduction rates of calculation time when applying CLA to a 3D point cloud}\label{Table3Dtime}
\end{table} 
The reduction rates in computation time when applying CLA with the reduction rate $50\%$ is $95.7\%$.
In particular, when the size of the 3D data is $p = 3,000$, it was not possible to obtain the PD without applying CLA due to our memory constraints and computational limitations. However, when CLA is applied, the PD can be successfully calculated even when $p = 3,000$.

\subsection{\bf{Preservation of Classification Accuracy}}\label{R_acc}
The goal of this section is to observe the changes in the original dataset and its PD when applying CLA, and to compare the data classification accuracy.
To compare the data classification accuracy when CLA is applied and when it is not applied, we analyze the results of the right workflow in Figure \ref{workflow} for the sphere type and random type point clouds presented in Section \ref{Data}. 
The performance of SVM is intricately tied to the careful selection of hyperparameters. 
Our goal is to facilitate a straightforward and unbiased comparison of the SVM performance for two types of point clouds.
Therefore, we deliberately refrained from adjusting any parameters in the SVM employed.\\

First, we analyze how the 2D and 3D datasets, as well as their corresponding PDs, change when applying CLA with the reduction rates $10\%, 30\%$, and $50\%$.
The first rows in Figure \ref{Data_PD_CLA_2D} and Figure \ref{Data_PD_CLA_3D} are the original point clouds and their corresponding PDs without applying CLA.
The second, third, and fourth rows in Figure \ref{Data_PD_CLA_2D} and Figure \ref{Data_PD_CLA_3D} represent the reconstructed datasets obtained by applying CLA with the reduction rates $10\%, 30\%$, and $50\%$ respectively to the original datasets, along with their corresponding PDs.
\begin{figure}[h!]
  \centering
  \includegraphics[width = 8.5cm]{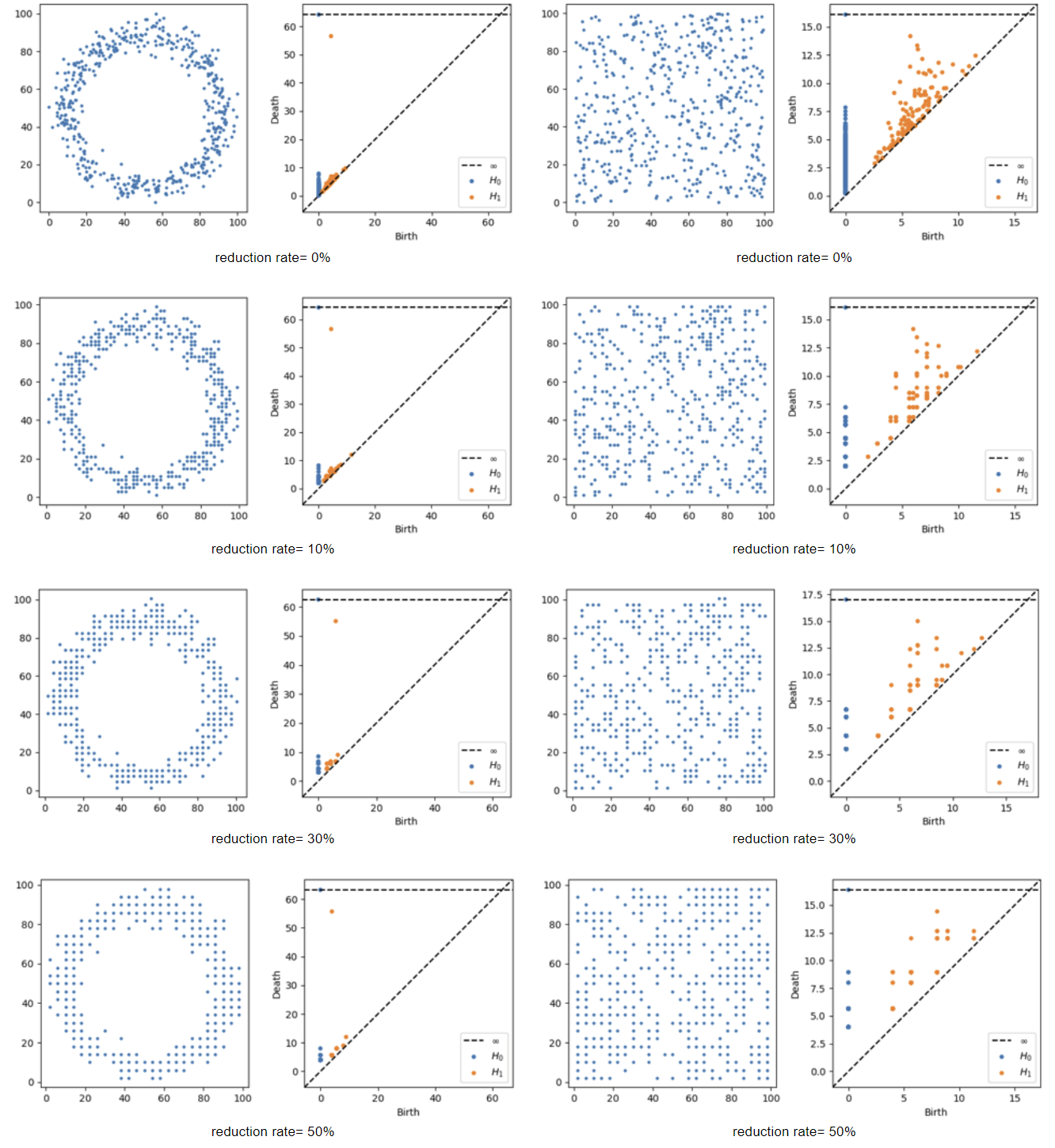}
  \caption{{\bf Evolution of the PDs for a 2D dataset} The PDs of point clouds of a 2D dataset applying CLA with the reduction rates $10\%, 30\%$, and $50\%$.}  \label{Data_PD_CLA_2D}
\end{figure}
When ignoring the points near the diagonal in the PDs, which can be considered as noise, it is evident that the features of the first persistent homology (orange points) of the 2D original datasets and the second persistent homology (green points) of the 3D original datasets are well-preserved in the PDs, despite the increase in the reduction rate of CLA.\\

\begin{figure}[h!]
  \centering
  \includegraphics[width = 8.5cm]{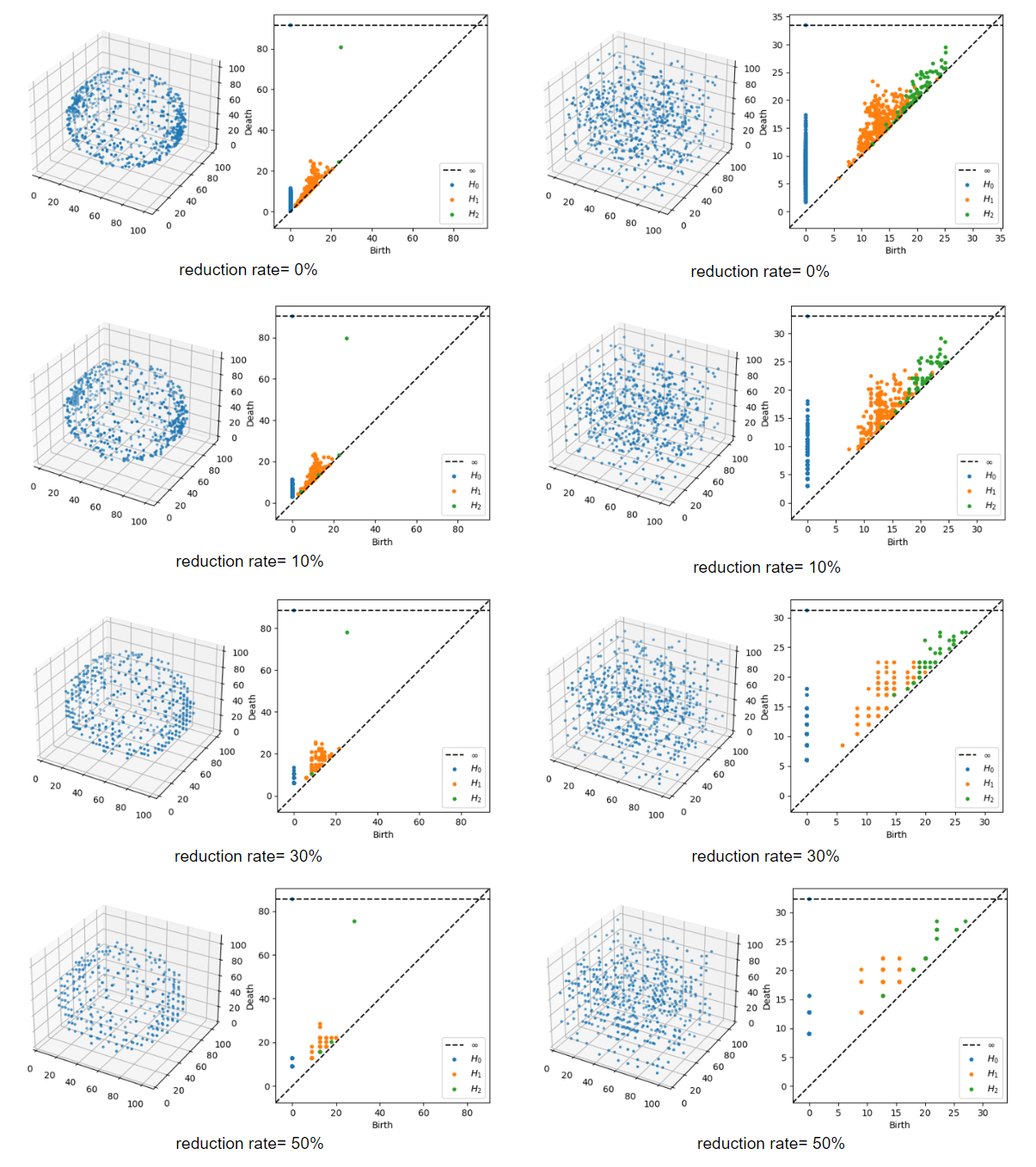}
  \caption{{\bf Evolution of the PDs for a 3D dataset} The PDs of point clouds of a 3D dataset applying CLA with the reduction rates $10\%, 30\%$, and $50\%$.}  \label{Data_PD_CLA_3D}
\end{figure}


Second, we investigate the preservation of classification accuracy when applying CLA.
For this purpose, we compare the classification accuracy derived from the following three different data processing methods using the dataset with $n = 10,000$ presented in Section \ref{Data}.
One is to classify the sphere type and random type point clouds using SVM alone, another is to classify their PDs using SVM, and the other is to classify the PDs of the reconstructed point clouds obtained by applying CLA using SVM.
The classification accuracy obtained from the 2D and 3D datasets is depicted in two graphs, denoted as (A) and (B), in Figure \ref{Fig_acc}, respectively. In both datasets, the number of points $p$ increases by $100$ from $100$ to $1,000$.
\begin{figure}[h!]
  \centering
  \includegraphics[width = 10cm]{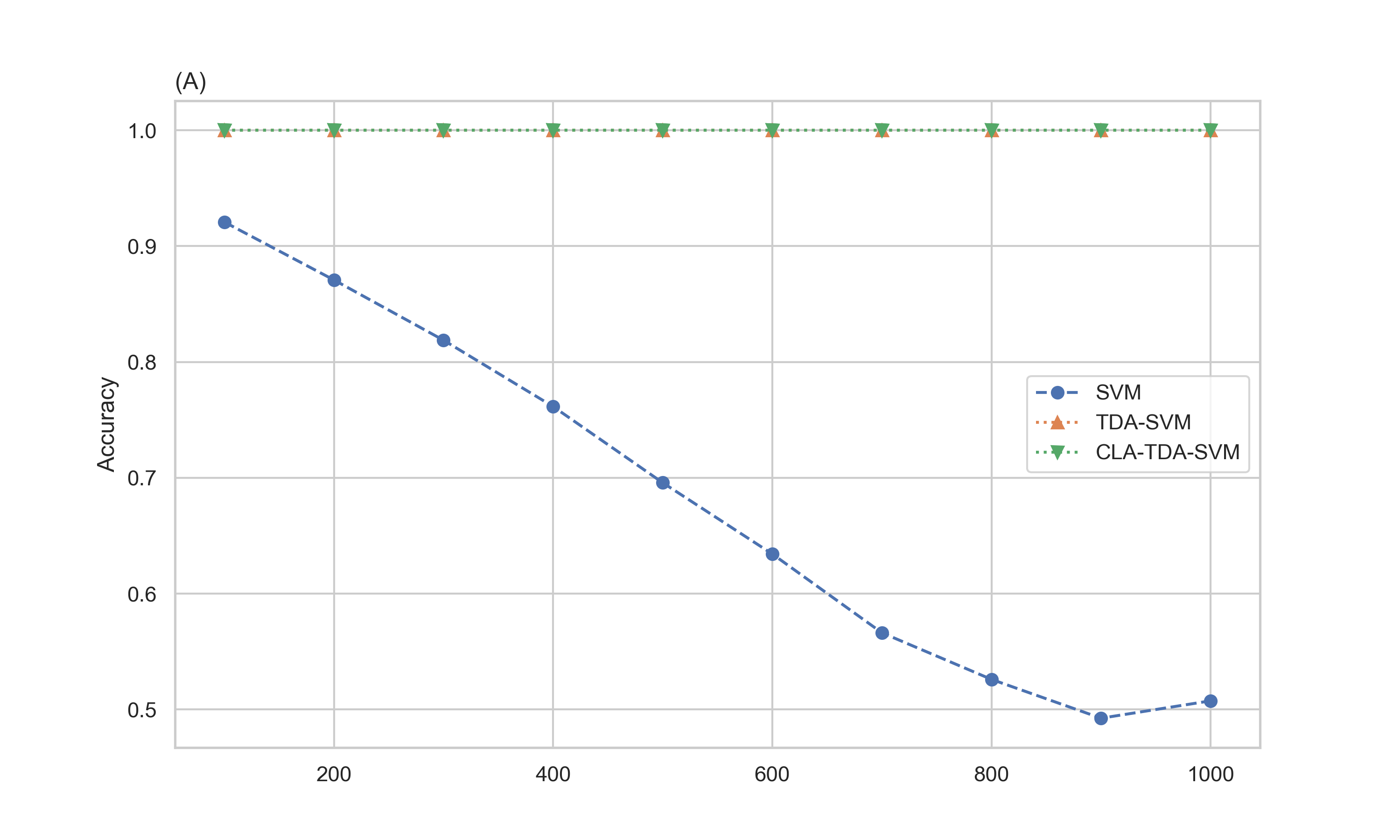}
  \includegraphics[width = 10cm]{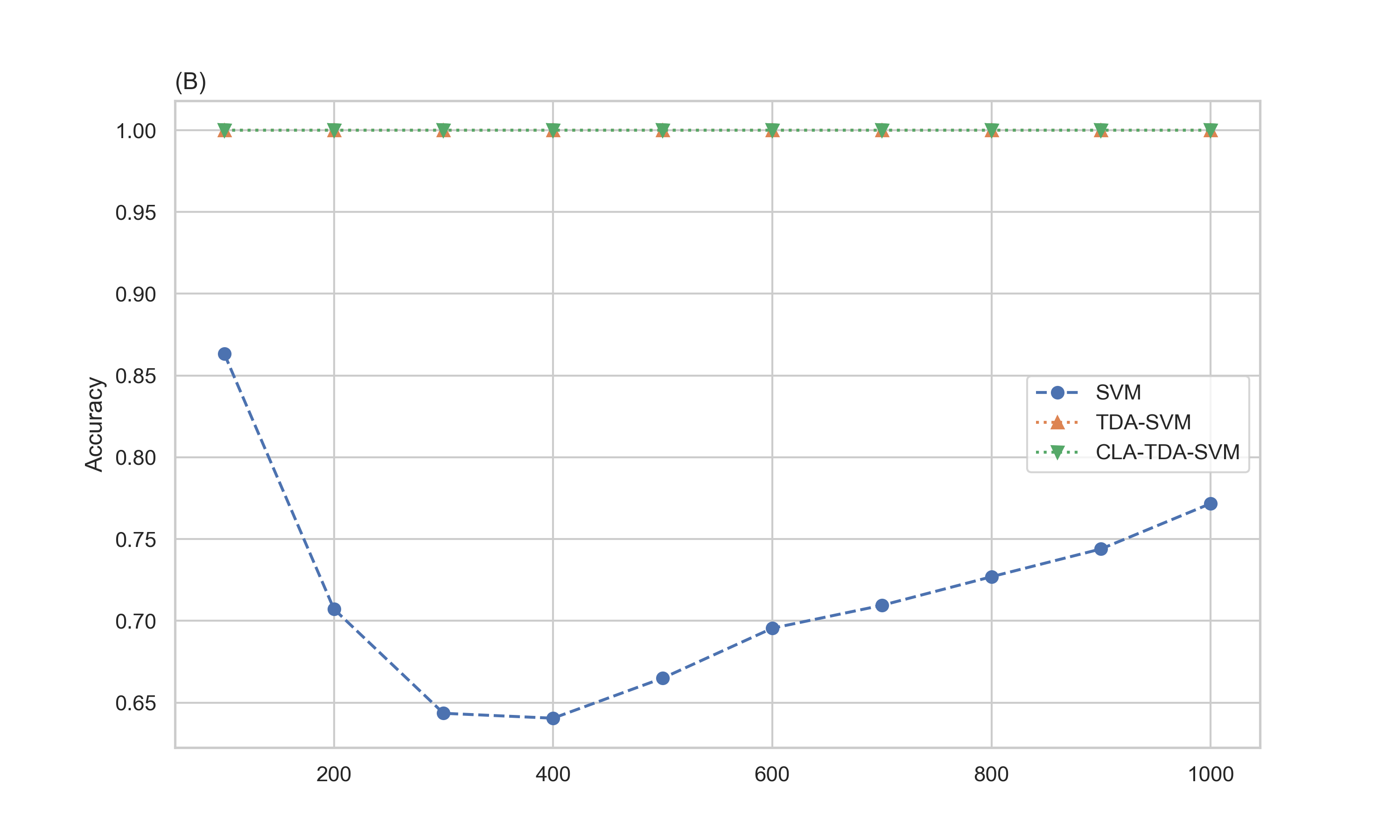}
 \caption{{\bf Classification accuracy of the 2D and 3D datasets} The above graphs show the accuracy of classification for sphere type point clouds and random type point clouds in the 2D and the 3D datasets, respectively, with $n=10,000$, where $p$ increases from $100$ to $1,000$ by $100$.
 The blue dashed lines represent the classification accuracy of the datasets using SVM alone.
 The orange dotted lines show the classification accuracy obtained by classifying the PDs of the datasets using SVM.
 Finally, the green dotted lines represent the classification accuracy determined by classifying the PDs of the reconstructed datasets obtained by applying CLA with the reduction rate $50\%$ using SVM.}
 \label{Fig_acc}
 \end{figure}
The blue dashed lines indicate the classification accuracy when only SVM is applied to the given datasets. The orange dotted lines present the classification accuracy of applying SVM to PDs obtained from the datasets. The green dotted lines show the classification accuracy of applying SVM to PDs of reconstructed datasets obtained by applying CLA with the reduction rate $50\%$ to the dataset.
When using only SVM for classification, it can be observed from graph (A) and graph (B) in Figure 10 that the classification accuracy of the 2D dataset is approximately $0.5$, and the classification accuracy of the 3D dataset is approximately $0.77$ when the number of points in the dataset is $p=1,000$.
On the other hand, when applying SVM to the PDs obtained from the datasets (i.e., utilizing TDA), the classification accuracy consistently reaches a value of $1$ in both the 2D and 3D datasets. Furthermore, it can be observed that even after reducing the number of points in both the 2D and 3D datasets by applying CLA, the classification accuracy still remains at $1$. This observation highlights the fact that CLA is a data reduction technique that effectively preserves the topological features of the data.



\vspace{1cm}
\section{\bf{Conclusion}}\label{Conclusion}

The research results can be summarized as follows:
It is natural that as the reduction rate increases, the corresponding time decreases accordingly. 
Table \ref{Table2Dtime} indicates that the use of CLA reduces the PD calculation time by approximately $93\%$ compared to calculations without CLA.
The disparity in computational cost becomes more pronounced when analyzing the 3D data.
Especially at $p=3,000$, the assessment of the computational cost without CLA was impeded by our computer specifications.
When CLA is applied with $r=50\%$, the rates of decrease for the 3D data show that the computational cost is up to $95\%$ less than calculations without CLA, as presented in Table \ref{Table3Dtime}. Notably, the reduction in computational cost is higher for the 3D data compared to the 2D data.

The application of TDA in combination with SVM for the sphere type and random type point clouds resulted in significantly higher classification accuracy compared to using SVM alone, as observed in Figure \ref{Fig_acc}.
Although CLA may slightly deform the given data as a preprocessing technique, there was no significant difference in classification accuracy when comparing the combined use of TDA and CLA with using TDA alone.
This suggests that CLA is an effective data reduction technique for TDA, and it is also theoretically supported by Theorem \ref{Thm}.\\

TDA proposes a novel approach to big data analysis. However, persistent homology, one of the representative methodologies of TDA, is currently not feasible due to its exponentially increasing computational complexity. Data reduction is the most fundamental technique used in data science to address computational complexity problems. Thus, various data reduction methods suitable for data analysis using persistent homology have been studied. The method proposed in this paper, CLA, improves upon previous related studies by allowing for data reduction as desired while preserving the topological features of high-dimensional and complex datasets. Therefore, CLA demonstrates that it can be used as an effective tool for the extension of topology in machine learning and the future of data analysis.

\vspace{1cm}
\section*{\bf{Acknowledgement}}
All authors equally contribute this paper. 
The work of Seung Yeop Yang and Seonmi Choi was supported by the National Research Foundation of Korea(NRF) grant funded by the Korean government(MSIT)(No. 2022R1A5A1033624). 
The work of Seonmi Choi was supported by the Basic Science Research Program through the National Research Foundation of Korea(NRF) funded by the Ministry of Education (No. 2021R1I1A1A01049100). 
The work of Jeong Rye Park was supported by the Basic Science Research Program through the National Research Foundation of Korea(NRF) funded by the Ministry of Education (No. 2021R1I1A1A01057767).


\begin{thebibliography}{0}

\bibitem{Bau21}
U. Bauer, {\it Ripser: efficient computation of Vietoris-Rips persistence barcodes}, J. Appl. Comput. Topol. {\bf 5} (2021), no. 3, 391–423.

\bibitem{Car09}
G. Carlsson, {\it Topology and data}, Bull. Amer. Math. Soc. (N.S.), {\bf 46} (2009), no. 2, 255–308.   

\bibitem{CarSil10}
G. Carlsson and V. De Silva, {\it Zigzag persistence}, Found. Comput. Math., {\bf 10} (2010), no. 4, 367–405.

\bibitem{CC-SGMO09}
F. Chazal, D. Cohen-Steiner, L. J. Guibas, F. M\'{e}moli, and S. Y. Oudot, {\it Gromov-Hausdorff Stable Signatures for Shapes using Persistence}, Computer Graphics Forum, {\bf 28} (2009), no. 5, 1393-1403.

\bibitem{CFLMRW15}
F. Chazal, B. T. Fasy, F. Lecci, B. Michel, A. Rinaldo, and L. Wasserman, {\it Subsampling methods for persistent homology},
In Proceedings of the 32nd International Conference on Machine Learning, F. Bach and D. Blei Eds. Lille, France, {\bf 37} (2015), JMLR: W\&CP, 2143–2151.

\bibitem{C-SEH07}
D. Cohen-Steiner, H. Edelsbrunner, and J. Harer, {\it Stability of persistence diagrams}, Discrete Comput. Geom., {\bf 37} (2007), no. 1, 103–120.

\bibitem{CV95}
C. Cortes and V. Vapnik,
{\it Support-vector networks}, Mach. Learn., {\bf 20} (1995), no. 3, 273–297.

\bibitem{EdeHar10}
H. Edelsbrunner and J. Harer, {\it Computational topology: an introduction}, American Mathematical Society, 2010.

\bibitem{Ghr08} 
R. Ghrist, {\it Barcodes: the persistent topology of data}, Bull. Amer. Math. Soc. (N.S.), {\bf 45} (2008), no. 1, 61–75.

\bibitem{Gro87}
M. Gromov, {\it Hyperbolic groups}, Essays in group theory, 75–263, Math. Sci. Res. Inst. Publ., 8, Springer, New York, 1987.

\bibitem{LSLIVACC13}
P. Y. Lum, G. Singh, A. Lehman, T. Ishkanov, M. Vejdemo-Johansson, M. Alagappan, J. Carlsson, and G. Carlsson, {\it Extracting insights from the shape of complex data using topology}, Scientific Reports, {\bf 3}, Feb. 2013.

\bibitem{MMW18}
A. Moitra, N. Malott, and P. A. Wilsey, {\it Cluster-based data reduction for persistent homology}, In 2018 IEEE International Conference on Big Data, ser. Big Data 2018, Dec. 2018, 327–334.

\bibitem{MSW20}
N. O. Malott, A. M. Sens, and P. A. Wilsey, {\it Topology Preserving Data Reduction for Computing Persistent Homology}, In 2020 IEEE International Conference on Big Data, ser. Big Data 2020, Dec. 2020, 2681–2690.

\bibitem{MW19}
N. O. Malott and P. A. Wilsey, {\it Fast computation of persistent homology with data reduction and data partitioning}, In 2019 IEEE International Conference on Big Data, ser. Big Data 2019, Dec. 2019, pp. 880–889.

\bibitem{OPTGH17}
N. Otter, M. A. Porter, U. Tillmann, P. Grindrod, and H. A. Harrington, {\it A roadmap for the computation of persistent homology}, EPJ Data Science, {\bf 6} (2017), no. 1.

\bibitem{RabBlu20}
R. Rabad\'{a}n and A. J. Blumberg,
{\it Topological data analysis for genomics and evolution Topology in Biology}, Cambridge University press, 2020.

\bibitem{RVT14}
K. N. Ramamurthy, K. R. Varshney, and J. J. Thiagarajan, {\it Computing persistent homology under random projection}, In 2014 IEEE Workshop on Statistical Signal Processing (SSP), Jun. 2014, pp. 105–108.

\bibitem{She13}
D. R. Sheehy, {\it Linear-size approximations to the Vietoris-Rips filtration}, Discrete Comput. Geom., {\bf 49} (2013), no. 4, 778–796.

\bibitem{She14}
D. R. Sheehy, {\it The persistent homology of distance functions under random projection}, In Proceedings of the Thirtieth Annual Symposium on Computational Geometry, ser. SOCG’14. New York, NY, USA: ACM, 2014, pp. 328–334. 

\bibitem{SC04}
V. de Silva and G. Carlsson, {\it Topological estimation using witness complexes}, In Eurographics Symposium on Point-Based Graphics, ser. SPBG ’04, M. Gross, H. Pfister, M. Alexa, and S. Rusinkiewicz, Eds. Goslar, DEU: The Eurographics Association, (2004), 157–166.

\bibitem{Vie27}
L. Vietoris, {\it \"{U}ber den h\"{o}heren Zusammenhang kompakter R\"{a}ume und eine Klasse von zusammenhangstreuen Abbildungen} (German), Math. Ann. {\bf 97} (1927), no. 1, 454–472.

\bibitem{ZomCar05}
A. Zomorodian and G. Carlsson, {\it Computing Persistent Homology}, Discrete Comput. Geom., {\bf 33} (2005), no. 2, 249-274.






















\end{thebibliography}
\end{document}